\title{Enforcing Timing Properties in Motorway Traffic}
\author{Christopher Bischopink
\institute{Carl von Ossietzky University Oldenburg, Oldenburg}
\email{bischopink@informatik.uni-oldenburg.de}
}
\newtheorem{theorem}{Theorem}
\newtheorem{example}{Example}
\newtheorem{lemma}{Lemma}
\newtheorem{remark}{Remark}
\begin{document}
\maketitle

\begin{abstract}
	In previous work \cite{BO23}, we proposed a \emph{Runtime Enforcement} Approach to deal with timing properties in motorway traffic, which are present in form of \emph{Timed Multi-Lane Spatial Logic (TMLSL)} formulae, a logic tailored to express both spatial and timing properties.
	Employing communication between the cars, we utilised a nondeterministic controller \enquote{guessing} which actions to execute next for each car, before asking the local monitors of the cars for permission to execute the announced actions.
	In this contribution, we consider a more reasonable controller that only considers sequences that satisfy its own properties. This is done utilising region automata that one can generate from the cars' specifications. In the approach, we also came along a minor decidability result for TMLSL.
\end{abstract}

\section{Introduction}
\label{sec:introduction}

With the number of (at least partially) autonomous cars increasing on the roads around the globe, challenges and advantages in the specification and verification of their behaviour occur. If one assumes that the cars are able to communicate with each other, a more detailed interplay between them is possible than with human drivers and allows finding solutions for complicated traffic situations that human drivers could easily miss. 

The roads we consider here are motorways, formalised as \emph{traffic snapshots} \cite{HLOR11} with a logic to reason about them called \emph{Multi-Lane Spatial Logic (MLSL)} \cite{HLOR11}. We extended this spatial logic towards \emph{Timed Multi-Lane Spatial Logic (TMLSL)} to also cover the timing aspect of a car's specification in \cite{BO22}. Based on TMLSL, we proposed a runtime-enforcement approach in \cite{BO23}, employing a nondeterministic controller that asked for the permission of other cars for the actions it wants to execute. Due to the nondeterminism, completely unreasonable sequences that even violate the own car's specification could be announced. In the lack of a result that allow announcing/checking only reasonable sequences, the nondeterministic controller still allowed us to show that the approach is complete.

In this work, we propose a more reasonable approach, utilising the region automaton of the cars' specifications. Still, all cars announce sequences they want to achieve, but this time all announced sequences would yield satisfying runs at least for the car that announces them. The announced sequences are then checked by a central entity, e.g. a road-side unit, for a run that is satisfying for all cars and informs the cars accordingly. We furthermore present a minor decidability result, eliminating one of the causes for the semi-decidability of TMLSL \cite{BO22}.

\paragraph{Related Work}
In the context of MLSL, different topologies have been explored in addition to motorway traffic, namely country roads \cite{HLO13} and urban traffic \cite{Sch18}, as well as their satisfaction problems \cite{Ody15b}\cite{FHO15} and controllers for cars in these topologies with different desirable properties such as liveness and fairness \cite{BS19,Sch18b}. Other approaches in the context of autonomous or automated driving systems use e.g. \emph{differential dynamic logic} \cite{LPN11} or a specification with extended types of timed automata \cite{LMT15}. These approaches mostly concentrate on a top-level view of the system under control. A more technical view of the evolution of a cars dynamics in an adaptive cruise control setting is e.g. given in \cite{AMP21}.
Runtime Enforcement \cite{Sch00} and runtime verification \cite{H08} are also well studied topics. To the best of our knowledge, runtime enforcement approaches however are more intensively considered in  more restricted settings than motorway traffic, where the system evolves quiet dynamically to the input given. Another feature is that in our case, the input and output of the system under control are different from each other.
\section{Preliminaries}
\label{sec:prelim}

In this Section, we introduce the formal concepts our approach is build on. We start with the model of motorway traffic, its logic and evolution in Sect.~\ref{ssec:MLSL} and continue with the timing model used, Sect.\ref{ssec:time}. The combination of them is called TMLSL and covered in Sect.~\ref{ssec:TMLSL}.

\subsection{Spatial Model of and Logic for Motorway Traffic}
\label{ssec:MLSL}

\paragraph{Model}
The spatial model we use in the setting of motorway traffic was introduced in \cite{HLOR11}. It allows only traffic in one direction on a fixed set of lanes $ \mathbb{L}=\{1,\dots, N\} $ with an infinite extension each. On these lanes, cars from the set of car identifiers $ \mathbb{I}=\{A,B,\dots\} $ drive, each car $ C $ of them with a certain speed $ \spd(C) $, acceleration $ \acc(C) $ and position on the lane, $ \pos(C) $. There are two different types of occupation a car can have on lanes, either a reservation $ \res: \mathbb{I} \rightarrow \mathcal{P}(\mathbb{L}) $, the space it physically occupies (multiple lanes if it is changing lanes at this moment) or a claim $ \clm: \mathbb{I} \rightarrow \mathcal{P}(\mathbb{L}) $, the lane a car wishes to change to, which is the equivalent of setting a turn signal. Altogether, this information is represented as a traffic snapshot $ \TS=(\res,\clm,\pos,\spd,\acc) $. 

In a traffic snapshot, there is no information present what the sizes of the cars and their braking distances are, as $ \pos $ only stores the rear end of each vehicle. We neglect the concept of a \emph{Sensor Function} here that makes this information available to us and simply assume that size and braking distance of each car is known. Also omitted is the \emph{View} that allows to only consider a finite extension of the infinite extension of a traffic snapshot when evaluating formulae. A graphical representation of three traffic snapshots is depicted in Fig.~\ref{fig:examplesequence}, where we also omitted showing concrete values for the position, speed and acceleration of the cars.

As already hinted at, a traffic snapshot describes the situation on the road at one point in time only. A situation on the road may evolve, which is handled in the model via \emph{transitions}.

\paragraph{Transitions}

We divide the set of transitions usable in a traffic snapshot into transitions regarding the discrete behaviour between lanes and transitions regarding the continuous behaviour along the lanes. The first set consists of car $ C $ claiming a lane $ n $ resp. withdrawing all claims ($ \claim{C,n}/\wdc{C} $) and car $ C $ reserving a lane resp. withdrawing all reservations except the one on lane $ n $ ($ \r{C}/\wdr{C,n} $). The second set is the one we focus more on, as it is considered more intensively in what follows, they handle the change of a car's acceleration to some value $ a $ ($ \mathsf{acc}(C,a) $) as well as the passing of $ t $ time units ($ t $). In the following definition, $ \oplus $ is the overriding operator of Z~\cite{WD96}:
\begin{align*}
	&\mathcal{TS} \xrightarrow{t} \mathcal{TS'} &\Leftrightarrow\ \ & \mathcal{TS'}=(\res,\clm,\pos',\spd',\acc)\\
	&&&\land \forall C \in \mathbb{I}:\pos'(C)=\pos(C)+\spd(C)\cdot t+\frac{1}{2}\acc(C)\cdot t^2\\
	&&&\land \forall C \in \mathbb{I}: \spd'(C)=\spd(C)+\acc(C)\cdot t \\
	&\mathcal{TS} \xrightarrow{\mathsf{acc}(C,a)} \mathcal{TS'}& \Leftrightarrow\ \  &\mathcal{TS'}=(\res,\clm,\pos,\spd,\acc')\\
	&&&\land \acc' =\acc\oplus\{C\mapsto a\},
\end{align*}

Over the set of actions, which are the transitions without the one where only time passes, we define \emph{timed words of actions} $ \omega = \langle (\alpha_0,t_0),\dots,(\alpha_n,t_n) \rangle $, with $ \alpha_i $ an action and $ \langle t_0,\dots,t_n\rangle $ a real-time sequence. For a formal account, we refer to \cite{Ody20}.

A graphical representation of a transition sequence including two transitions (one discrete and one continuous) is shown in Fig.~\ref{fig:examplesequence}. It can be interpreted as the timed word $ \omega_1=\langle(\r{A},3)\rangle $.

\begin{remark}[Dynamic behaviour of the cars]
	The model used for describing the dynamics of the cars is a quiet simple one, ignoring many difficulties that one would encounter in the real world, such as friction or variable acceleration capabilities based on the current speed. Still, we believe that it is a good abstraction of the real world's dynamics. Especially if one considers that the positions (plus size and braking distance) could be over-approximations, this allows some degree of freedom in achieving a behaviour to match the correct positions.
\end{remark}

\begin{figure}
	\includegraphics[width=\linewidth]{./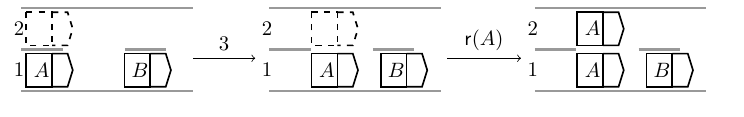}
	\caption{A transition sequence with three traffic snapshots and two actions, taking place with two cars on two lanes. Car $ A $ is faster than car $ B $, so it comes closer when $ t=3 $ time units pass. Afterwards, it reserves its formerly claimed space (dashed copy of it on the neighbouring lane), one step further in an attempt to overtake car $ B $.}
	\label{fig:examplesequence}
\end{figure}

\paragraph{Logic}

To reason about the traffic situations formalised as traffic snapshots, the logic MLSL \cite{HLOR11} was introduced . Here, we consider a variant of MLSL called MLSLS (MLSL \emph{with scopes}) \cite{FHO15}, limiting the range of cars over which car variables are evaluated to a finite range. Formulae of MLSLS are constructed according to the grammar 

$$\varphi ::= \gamma=\gamma' \mid  \mathit{free}  \mid  \mathit{re}(\gamma) \mid  \mathit{cl}(\gamma) \mid  l=k \mid \exists c.\varphi \mid  {\varphi_1}\hchop{\varphi_2} \mid  \vchop{\varphi_1}{\varphi_2} \mid  cs:\varphi,$$
and standard Boolean combinations of such formulae, where $ \gamma$ and $\gamma'$ are car variables, $ k\in\mathbb{R} $ and $ cs $ is a (sub-)set of car variables.

The atoms that formulae are constructed from are the comparison of two car variables $ \gamma = \gamma' $, $ \free $ denotes a segment with free space that is not occupied by others, the reservation $ \re{\gamma} $ of a car $ \gamma $, the claim $ \cl{\gamma} $ of a car $ \gamma $ and the comparison $ l=k $ of the length $ l $ of the considered segment against some value $ k $. With $ \exists c.\varphi $ one asks for the existence of a car $ c $ that satisfies $ \varphi $. The horizontal chop operator $ \left( {\varphi_1}\hchop{\varphi_2}\right) $ is used to determine if it is possible to divide the current segment into two parts along the lanes s.t. in the first part $ \varphi_1 $ and in the part directly ahead of it $ \varphi_2 $ holds. The same can be specified with the vertical chop operator $ \left(\vchop{\varphi_1}{\varphi_2} \right)$, but the point to divide is in between two lanes this time. $ cs:\varphi $ limits the scope over which $ \varphi $ is evaluated to the finite domain $ \mathit{cs} $ and effects only formulae that use quantification over the cars in their semantics and thus only $ \exists c.\varphi $ and $ \free $. In Sect.~\ref{sec:decidability}, we focus on length comparisons $ l=k $ and simply abbreviate them as $ \theta $, as there can be multiple of them regarding the same segment.

A common abbreviation used is the \emph{somewhere} modality $ \somewhere{\varphi} $, expressing that there is a partition on the road along and in between the lanes s.t. $ \varphi $ holds in some point of the partition.

For a formal definition, especially about the exact semantics of MLSLS formulae, we refer the reader to \cite{FHO15}, but would like to point out that MLSLS is, in contrast to pure MLSL, decidable.

\subsection{Model and Logic of Time}
\label{ssec:time}
The logic we consider here for the timing aspects is called \emph{State-Clock Logic (SCL)} \cite{RS97}. Formulae from this logic are constructed over an alphabet of propositions $ \Sigma $ according to the grammar

$$ \psi ::= p \ \mid \ \psi_1 \lor \psi_2 \ \mid \  \lnot \psi \ \mid \ \until{\psi_1}{\psi_2}\ \mid \ \since{\psi_1}{\psi_2}\ \mid \ \rtnext{\sim c}{\psi}\ \mid \ \rtlast{\sim c}{\psi}, $$
with  $\ \sim\ \in \{<,\leq,=,\geq,>\} $ and $ p\in\Sigma $.

Apart from well-known Boolean combinations of formulae and the usual until- and since operators ($ \until{} $ resp. $ \since{} $), SCL allows to measure the time since ($ \vartriangleleft $)/ until ($ \vartriangleright $) a formula $ \psi $ held/holds for the last/next time and compare this difference with $ \sim c $. 

The semantics of SCL formulae is evaluated on (usually infinite) \emph{timed sequences of states} $ m=\langle(s_0,I_0,),(s_1,I_1),\dots \rangle $ with $ s_i\subseteq\Sigma $ and $ \langle I_0,I_1,\dots \rangle $ a monotonically increasing sequence of adjacent intervals. Intuitively, the formula $ \rtnext{\sim\ c}{\psi} $ holds at time point $ t $ in the $ i $th state of $ m $, written $ (m,i,t)\vDash\rtnext{\sim\ c}{\psi} $ iff there is some state $ (s_j,I_j) $ at position $ j>i $ where $ \psi $ holds, all states in between $ i $ and $ j $ do not satisfy $ \psi $ and the difference between the left border of the $ j$th interval $ I_j $ and $ t $ satisfies $ \sim\ c $ . The analogous applies for the operator $ \rtlast{\sim\ c}{\psi} $, the semantics of the remaining operators is as expected. Example for both syntax and semantics are given in the next section, for a more formal account on the topic we refer the reader to \cite{RS97}.

The decidability problem of SCL is known to be decidable. Given a SCL formula $ \psi $, one can construct a \emph{State-Clock (SC)} Automaton $ A_\psi=(\mathcal{P},C_\mathcal{P},L,L_0,E,\mathcal{L},\Delta,\mathcal{F}) $, with $ \mathcal{P} $ the set of propositions used, $ C_\mathcal{P} $ the set of clocks, $ L $ and $ L_0 $ the (initial) locations of $ A_\psi $, a transition relation $ E\subseteq L\times L $, a labelling function $ \mathcal{L} $ assigning the propositions valid in it to every location of $ L $, another labelling function $ \Delta $ assigning constraints over $ C_\mathcal{P} $ to every location of $ L $ and a family of B\"uchi acceptance sets $ \mathcal{F} $.  From there on a \emph{Region-Automaton} $ \mathcal{R}(A_\psi) $ \cite{RS97}\cite{AD94} can be constructed. Having operators to compare the time to for both the future and the past, SC automata have a history clock $ x_p $ and a prophecy clock $ y_p $ for each proposition $ p \in \mathcal{P}$ in $ C_\mathcal{P} $. Both types of clocks need to be respected when constructing the region automaton. A \emph{region} $ [\nu] $ describes a class of clock valuations $ \nu $ that cannot be distinguished  by any SC automaton and can be represented as a set of (in-)equalities over the set of clock variables $ x_y,\ y_p $  and the natural numbers $ \mathbb{N} $. Iff the language of $ \mathcal{R}(A_\psi) $ is not empty, the formula $ \psi $ is satisfiable. In \cite{BO23}, we extended State-Clock automata with \emph{broadcast communication} like the timed automata of UPPAAL \cite{UPPAAL97}. For a formal account on the broadcast communication used we refer to \cite{Sch18} and only point out that sending some data $ d $ over a channel $ c $ is denoted as $ c!\langle d\rangle $ and receiving this data on the same channel is denoted as $ c?\langle d\rangle $. We also allowed simple functions dealing with data structures and simple computations on the transitions of communicating SC automata.

\subsection{TMLSL}
\label{ssec:TMLSL}

To express and reason about both spatial and timing properties in motorway traffic, we introduced \emph{TMLSL} \cite{BO22}. The idea of this logic is to use MLSLS-formulae as the propositions that SCL formulae are build from.  The intuitive idea for the semantics is that a traffic snapshot $ \TS $ with a timed word of actions $ \omega  $ is a model for a formula $\varphi$, $ \TS_0,\omega\vDash \varphi $  iff there is a timed sequence of states $ m(\TS_0,\omega) $ that is propositionally consistent and complete in the subformulae of $ \varphi $, describes the evolution of $ \TS $ along $ \omega $ and is a model of $ \varphi $ in the SCL-semantics, $ m(\TS,\omega)\vDash_{SCL}\varphi $. We now give an example for a TMLSL formula as well as their satisfaction. For simplicity, we use car identifiers instead of car variables in the MLSLS propositions in the formula.

\begin{example}[TMLSL]
	\label{ex:tmlsl}
	Consider the TMLSL formula $$\varphi_1= \somewhere{\re{A}\hchop\free\land{l=21}\hchop\re{B}} \implies \rtnext{=5}{\somewhere{\re{A}\hchop\free\land{l=15}\hchop\re{B}}}$$ that specifies that when the distance between the reservations of the two cars $ A $ and $ B $ is equal to $ 21 $ distance units somewhere in the traffic snapshot, it needs to be equal to $ 15 $ distance units within exactly $ 5 $ time units. A satisfying sequence of states is 
	\begin{align*}
		 &m=\langle(\somewhere{\re{A}\hchop\free\land{l=21}\hchop\re{B}},[0,0]),(\somewhere{\re{A}\hchop\free\land{l\leq21\land l>15}\hchop\re{B}},(0,5)),\\&(\somewhere{\re{A}\hchop\free\land{l=15}\hchop\re{B}},[5,7))\rangle, 
	\end{align*}
	
	 one that is not is
	 
	 \begin{align*}
	 	&m'=\langle(\somewhere{\re{A}\hchop\free\land{l=21}\hchop\re{B}},[0,0]),(\somewhere{\re{A}\hchop\free\land{l\leq21\land l>15}\hchop\re{B}},(0,5]),\\&(\somewhere{\re{A}\hchop\free\land{l=15}\hchop\re{B}},(5,7))\rangle,
	 \end{align*}
  as the distance between the two cars reached the value $ 15 $ too late. The only difference between $ m $  and $ m' $ are the shapes of the second and third interval. Please note that we omitted some subformulae of $ \varphi_1 $ in the sequences in an attempt to keep them readable.
	
	Ex.~\ref{ex:running} gives values for the positions, speeds and accelerations of the cars in the first traffic snapshot $ \TS $ of Fig.~\ref{fig:examplesequence} and a timed word of actions $ \omega $ s.t. applying $ \omega $ to $ \TS $ results in the timed sequence of states $ m(\TS_0,\omega) $ that satisfies $ \varphi_1 $ written $ \TS,\omega\vDash\varphi_1 $, resp. $ m(\TS_0,\omega)\vDash_{SCL}\varphi_1 $.
\end{example}

\begin{remark}
	Using MLSLS formulae as the propositions of SCL rather than MLSL actions imposes some difficulties, still some situations on the road can only be described using formulae rather than action. Additionally, actions take zero time in the model, so one could argue that they are not observable from the outside. So for the cars on the motorway that we want to control, we have MLSL actions as the input to them, but the system produces evolutions of traffic snapshots as an output, which we observe through MLSLS formulae.
\end{remark}

\paragraph{Finite Semantics}
In \cite{BO23}, we introduced a finite semantics for TMLSL, suited for the finite sequences that are usually available in runtime monitoring/enforcement and the reality on motorways. Intuitively, a finite word of actions $ \omega $ satisfies a formula $ \varphi $ in the finite semantics up to time $ t $, $ \TS,\omega\vDash_t\varphi $ iff there exists at least one suffix $ m' $ s.t. $ m(\TS,\omega).m' \vDash_{SCL} \varphi$ in the infinite semantics.
\section{Decidability Results}
\label{sec:decidability}

Regarding the decidability of TMLSL, we point out that the logic is at least semi-decidable \cite{BO22}. In answering this question, we considered maximum values on the acceleration of the cars (from $ \acc_{min}$  to $ \acc_{max} $) and the speed (from $ 0 $ to $ \spd_{max} $), as in the real world there are (at least) physical bounds, too. We do the same here. The decidability results of SCL and MLSLS do not directly transfer to TMLSL, as we need to interpret the SCL-propositions, which are MLSLS-formulae, and the cars in the traffic snapshot may not be able to behave as specified in TMLSL. One cause for only semi-decidability are the actions regarding the dynamic behaviour along the lanes, the change of a car's acceleration and the passing of time. While it is easy to see what discrete actions need to be executed and when, given a timed sequence of states $ m $, it was unknown how many acceleration changes are needed to achieve cars behaving correct with respect to the lengths constraints specified in $ m $. In this paper, we give an algorithm that decides this question. Before doing so, however, we start with an example (adjusted example of \cite{BO22}):

\begin{example}
	\label{ex:running}
	Consider a traffic snapshot with one lane and two cars $ A $ and $ B $, $ B $ driving ahead of $ A $, where the distance between the two cars is equal to $ 21 $ and both of them have a speed of $ 4 $. For simplicity, we furthermore assume that car $ B $ cannot change its acceleration, it is fixed at $ 0 $, the initial acceleration of $ A $ does not matter. For this traffic situation, we have a specification expressing that the distance between the two cars needs to be equal to $ 15 $ within $ 5 $ time units, formalised as the formula $ \varphi_1 $ from Ex.~\ref{ex:tmlsl}. Furthermore, assume that we have $ \acc_{min}=-10,\ \acc_{max}=5 $ and $ \spd_{max}=13 $ as bounds on the dynamic behaviour.  In this example, there is no timed word of actions that allows the traffic snapshot to behave as specified, if we only allow acceleration changes at one point in time, as we either obtain a speed to fast or need to accelerate stronger than the specified bounds allow. If we allow acceleration changes at two points, there is a solution: $ \omega=\langle(\acc(A,0.75),0),(\acc(A,-6),4)\rangle $. Letting one further time unit pass results in a distance exactly $ 15 $.
\end{example}

As described in Sect.~\ref{ssec:MLSL}, the dynamics of each car $ C $ evolves according to the simple mechanical equation $ \pos'(C) = \pos(C) + \spd(C)\cdot t + \frac{1}{2}\cdot\acc(C)\cdot t^2 $, with $ t $ being the time that elapses and $ \pos'(C) $ the new position of $ C $. The speed evolves according to $ \spd'(C)=\spd(C)+t\cdot\acc(C) $.

For a finite timed sequence of states $ m=\langle(s_0,I_0),\dots,(s_m,I_m) \rangle  $ and a number $ n $ of points in which we can split the interval $ I= [0,t]=\langle I_0,\dots,I_m\rangle $, we define $ \mathit{DYN}(m,n,I)$ as the set of equations that describe the solution space of $ m $ on the (timing) interval $ I $ for these $ n $ splitting points. In the equations listed below, we only consider the length measurements/constraints that we need to satisfy in $ s_i $, as the question which discrete actions one need to execute between two phases is easy to answer. We summarise these constraints as $ \theta_m(t) $ for the length constraints that occur in the phase $ (s_i,I_i) $ of $ m $ with $ t\in I_i $. Iff the difference in the position of any two cars affected by it satisfies $ \theta_m(t) $ at point $ t $, we denote this as $ \Delta \pos(t) \vDash  \theta_m(t) $. Please note that $ \theta_m(t) $ can consist of an arbitrary number of constraint, e.g. when we require that the distance between two cars is smaller than some value and greater than some other value, for example when we want to exclude (potential) collisions while being quite close to the car in front. Additionally, $ \theta_m(t) $ can constrain the distance between more than two cars.

\begin{align}
	&\mathit{DYN}(m,n,I) =\notag \\
	 &\pos_0(C) \text{ and }  \spd_0(C) \text{ are as in } \TS_0 ,\\
	  &\pos_n(C) = \pos_{n-1}(C)+ \spd_{n-1}(C)\cdot t_{n-1} \cdot \frac{1}{2}\cdot \acc_{n-1}(C)\cdot t_{n-1}^2 ,\\
	  &\spd_n(C)= \spd_{n-1}(C) + \acc_{n-1}(C)\cdot t_{n-1} ,\\
	  & \forall t' \in I: \Delta\pos(t) \vDash \theta_m(t) \text{ and}\\
	 & \forall t' \in I,\ \forall C\in \mathit{cs}:  \spd_{t'}(C)  \text{ and }  \acc_{t'}(C)  \text{ remain in the specified bounds.}
\end{align}

An illustration of the solution space and a solution we are searching for for two cars is depicted in Fig.~\ref{fig:illustration}. As one can see, we want to find out how many splitting points there need to be such that the difference in the position of the cars satisfies the spatial constrains $ \theta_i $ of each phase $ (s_i,I_i) $ of $ m $ as well as the constraints on the speeds of the cars. The possible curves for the relative position that the evolution yields are furthermore constrained by the maximal and minimal acceleration forces possible. Initial values for the relative position and the speeds are fixed, as they are determined by the traffic snapshot from which on we ask for a satisfying sequence of actions. In the figure, we have both a maximum and a minimum spatial constraint on the distance between the cars in each of the phases. Please note that also a single constraint (distance is e.g. greater than some value) or even no constraint (the distance between the cars is not important in this phase) is possible. Despite being possible, the later one should usually not occur because requiring collision freedom should always be included in a specification, which immediately imposes length constraints.

If $ \mathit{DYN}(m,n,I) $ is satisfied, $ n $ splitting points are sufficient to obtain a satisfying sequence of actions that satisfies the behaviour specified by $ m $. We furthermore need a relaxed version $ \mathit{DYN}'(m,n,I) $, which is equivalent to $ \mathit{DYN} $, except that we alter equation $ (4) $ and remove the length constraint on the last phase of $ m $ not reached. Please note that we can rewrite the equations $ (4) $ and $ (5) $ in an equivalent form that does not use quantifiers, so we gain an easy to solve equation system not dealing with quantifiers over infinite domains.

Later, we are interested how $ \mathit{DYN}'$ behaves when answering the question whether or not there is a solution to $ \mathit{DYN} $. For this purpose, we denote with $ \mathit{max\_extension(DYN')} $ the maximal value $ x $ s.t.  $ \mathit{DYN'}(m,n,[0,x]) $ has a solution. Similarly, we denote with $ \mathit{max\_outcome\_pos(DYN')} $ the largest interval $ [y,y'] $ s.t. $ \mathit{DYN}'(m',n,I) $ has a solution, where $ m' $ is identical to $ m $ except that in the last phase, the length constraints are replaced with $ [y,y'] $. $  \mathit{max\_outcome\_spd(DYN')}  $ is the largest interval $ [z,z'] $  for the speed that a car can have when exceeding $ I $ while $ \mathit{DYN}'(m,n,I) $ still has a solution.

Despite not focusing on that topic, we would like to mention that both $ \mathit{max\_outcome\_pos(DYN')} $ and $ \mathit{max\_outcome\_spd(DYN')} $ are vectors, the first one over the differences in positions that are compared in the phases and the second one over the cars.

\begin{figure}
	\begin{center}
		\includegraphics[scale=1.25]{./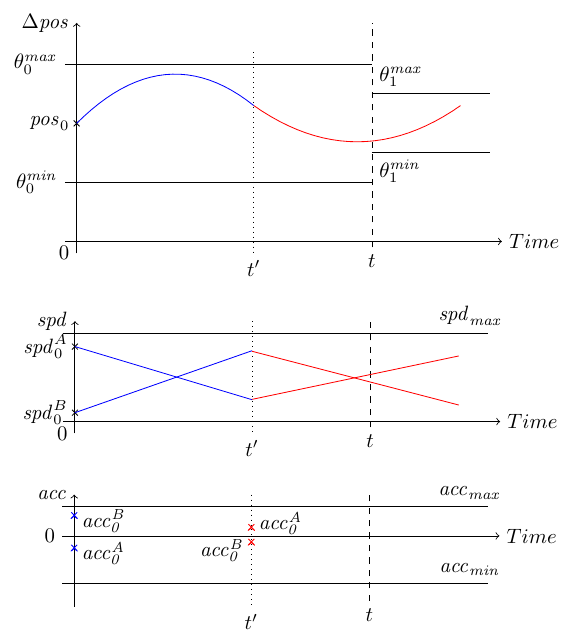}
	\end{center}
	\caption{Dynamic Evolution of two cars on a timing interval with one additional acceleration change for each car in between. For a sequence of actions that satisfies the specification, all values need to stay inside the specified intervals. For the relaxed version $ \mathit{DYN}' $, this does not apply for the last length constraint ($ \theta_1 $ here).}
	\label{fig:illustration}
\end{figure}

Utilising the aforementioned equations, we now can give an answer to the question whether or not there is a satisfying sequence of actions s.t. the cars behave as specified:

\begin{theorem}[Number of Accelerations]
	\label{thm:accChanges}
	Given a traffic snapshot $ \TS $ and timed sequence of states $ m =\langle(s_0,I_0),\dots,(s_n,I_n) \rangle $ with adjacent intervals $ I_i $ and sets $ s_i $ of lengths constraints  $ \theta $ between the cars to achieve, one can decide after finitely many steps whether or not there is a sequence of acceleration changes $ \omega =\langle (\alpha_0,t_0),\dots,(\alpha_{n'},t_{n'}) \rangle $ s.t. $ m(\TS_0,\omega)=  m $.
\end{theorem}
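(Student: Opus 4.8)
The plan is to split the question into two parts: (i) for a \emph{fixed} number $n'$ of acceleration changes, decide whether a suitable $\omega$ exists, and (ii) exhibit a bound $N$, computable from $m$ and $\TS$, such that if any $\omega$ works then one with at most $N$ acceleration changes works. Granting both, the decision procedure is simply to run the procedure of (i) for $n'=0,1,\dots,N$ and answer \enquote{yes} iff one of these succeeds; this terminates after finitely many steps.

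For (i) I would argue that the existence of an $\omega$ with exactly $n'$ acceleration changes realising $m$ is expressible as a closed formula of the first–order theory of real closed fields, hence decidable by Tarski--Seidenberg. The unknowns are the switching times $t_0<\dots<t_{n'-1}$ (which, together with the phase borders of $m$, partition $[0,t]$), the chosen acceleration values per car, and the induced positions and speeds of every car at every breakpoint, linked by the mechanical equations of Sect.~\ref{ssec:MLSL}, i.e.\ by equations $(1)$--$(3)$ of $\mathit{DYN}$. The requirement $m(\TS_0,\omega)=m$ then says that on each subinterval of each phase $(s_i,I_i)$ the relative positions satisfy \emph{exactly} the length constraints of $s_i$ while speeds stay in $[0,\spd_{max}]$ and accelerations in $[\acc_{min},\acc_{max}]$ (equations $(4)$--$(5)$); note that an acceleration change within a phase does not split it, since MLSLS subformulae do not depend on acceleration. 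The only quantifiers are the $\forall t'\in I$ of $(4)$ and $(5)$, and these are eliminable: on each subinterval $\Delta\pos(\cdot)$ is quadratic and $\spd(\cdot)$ affine in $t'$, so \enquote{this quadratic stays in $[a,b]$ on $[c,d]$} (resp.\ the affine analogue) is a quantifier-free Boolean combination of polynomial (in)equalities in the coefficients and interval endpoints (check the endpoints, and the vertex when it lies inside). The open/closed shape of the $I_i$ and the \enquote{exactly $s_i$} clause add only further polynomial equalities and strict inequalities.

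For (ii), the heart of the argument, I would reason phase by phase: within a phase $(s_i,I_i)$ each car is a one–dimensional double integrator with bounded control $\acc(C)\in[\acc_{min},\acc_{max}]$, a bounded-speed state constraint $\spd(C)\in[0,\spd_{max}]$, and — since $s_i$ is a conjunction of length comparisons — the joint relative-position vector is confined to a fixed polytope throughout the phase, with the states at the two phase borders prescribed by the neighbouring phases. Feasibility of such a border-to-border transition, when at all possible, is witnessed by a trajectory with only a constant number of bang / slide arcs per car (steer extremally until a state constraint becomes active, slide along it, then steer extremally to the target value; the slide arc is realisable because typically $\acc_{min}\le 0\le\acc_{max}$), a standard reachability fact for bounded double integrators. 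Summing this constant over the $n+1$ phases and the finitely many cars in $\mathit{cs}$ yields the computable bound $N$; equivalently, one can phrase this via $\mathit{DYN}'$ and observe that the reachable relative-position / speed sets $\mathit{max\_outcome\_pos}(\mathit{DYN}')$ and $\mathit{max\_outcome\_spd}(\mathit{DYN}')$ grow monotonically in $n'$ and stabilise after computably many steps.

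I expect the main obstacle to be exactly this bound, and inside it the case where a length constraint in some $s_i$ couples three or more cars, so the convenient per-pair reasoning fails. The remedy is to treat the tuple $(\pos(C),\spd(C))_{C\in\mathit{cs}}$ as the state of a single decoupled double integrator of dimension $2|\mathit{cs}|$ subject to one polytopic state constraint: its time-parametrised reachable set from a fixed state is still a computable (semialgebraic) set, and a number of switches bounded in terms of $|\mathit{cs}|$ and the number of facets of the polytope still suffices, with no unbounded blow-up. Once $N$ is fixed, combining it with the decision procedure of (i) proves the theorem.
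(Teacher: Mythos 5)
Your overall architecture --- (i) decide feasibility for a fixed number $n'$ of acceleration changes, (ii) compute a bound $N$ so that only $n'\leq N$ need be tried --- is reasonable, and part (i) is sound: for fixed $n'$ the constraint system is semialgebraic (the $\forall t'$ in $(4)$--$(5)$ eliminates exactly as you say, and the paper itself notes that these equations can be rewritten quantifier-free), so Tarski--Seidenberg applies. The genuine gap is part (ii), which is precisely the content of the theorem and which you assert rather than prove. The ``standard reachability fact'' you invoke is standard only for a single double integrator steering into a target set, without a prescribed arrival time and without coupled state constraints; here you need exact realisation of each phase's constraint set throughout a prescribed interval $I_i$, the length constraints in $s_i$ may couple several cars into one polytope over relative positions, and the states at the phase borders are unknowns of the problem, not data prescribed by the neighbouring phases (this last point is repairable by arguing from an arbitrary feasible witness and freezing its border states, but you do not say so). For that setting, the claim that boundedly many bang/slide arcs suffice, with a bound depending only on $|\mathit{cs}|$ and the number of facets, is not a citable result, and your ``remedy'' paragraph restates the claim in dimension $2|\mathit{cs}|$ instead of proving it. You yourself flag this as the main obstacle; as written, the proof of the theorem reduces to an unproved lemma.

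The paper takes a different route around exactly this point: it does not produce an a priori bound $N$ at all. Algorithm~\ref{alg:decidingACC} walks through the prefixes $m'$ of $m$ and increases the number of splitting points $n$ until the three quantities $\mathit{max\_extension}$, $\mathit{max\_outcome\_pos}$ and $\mathit{max\_outcome\_spd}$ of the relaxed system $\mathit{DYN}'$ no longer change from $n$ to $n+1$, and only then checks $\mathit{DYN}(m,I,n)$; Lemma~\ref{lem:alg} argues that these reachable-outcome quantities are monotone in the number of splitting points and, once stationary for one increment, remain so, so the maximal outcome of each phase has converged. Your closing remark that one could ``equivalently'' observe stabilisation of the $\mathit{DYN}'$ outcomes is in fact the paper's actual argument, but simply gesturing at it does not close your gap: you would still have to justify why one stationary increment implies global stationarity and why stabilisation occurs after computably many steps --- that justification, not the real-closed-fields step, is where the work lies.
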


We can decide this question using Alg.~\ref{alg:decidingACC}.

\begin{algorithm}[ht!]
	\caption{Deciding Acceleration}\label{alg:decidingACC}
	\begin{algorithmic}[1]
		\Require Input: sequence of states $ m =\langle p_1,\dots,p_m \rangle$, Interval $ I=[0,t] $\\		
		$ n\gets 0; \ i\gets 1; $		
		\While{$ i\leq m $}	\\
		\ \ \ \ \ \ $ m'\gets \langle p_1,\dots,p_i\rangle; $			
			\While{$ \mathit{max\_extension(DYN'(m',I,n))} \neq \mathit{max\_extension(DYN'(m',I,n+1))}$\\		
				\ \ \ \ \ \ $\lor\mathit{max\_outcome\_pos(DYN'(m',I,n))} \neq		
				 \mathit{max\_outcome\_pos(DYN'(m',I,n+1))}$\\		
				\ \ \ \ \ \ 		$\lor\mathit{max\_outcome\_spd(DYN'(m',I,n))} \neq \mathit{max\_outcome\_spd(DYN'(m',I,n+1))}$}\\			
			\ \ \ \ \ \ \ \ \ \ \ \ $ n\gets n+1 ;$	
			\EndWhile	\\
			\ \ \ \ \ \  $ i\gets i+1 ;$
		\EndWhile		
		\If{$ \mathit{DYN(m,I,n)} $ has a solution}\\		
		\ \ \ \ \ \ \Return $\langle(\acc_{0}(C),t_0),\dots,(\acc_{n}(C),t_{n})\rangle$ of $ \mathit{DYN(m,I,n)} $ (for each car $ C $)
		
		\Else \\		
		\ \ \ \ \ \ \Return no Solution existent.		\EndIf
	\end{algorithmic}
\end{algorithm}

\begin{lemma}[Termination and Correctness]
	\label{lem:alg}
	Algorithm~\ref{alg:decidingACC} terminates \iff a solution is found and returns it or there is no solution at all.
\end{lemma}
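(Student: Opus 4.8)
The plan is to split the statement into two parts: \emph{soundness} of the output (if a sequence is returned it realises $m$, and if ``no solution'' is returned then genuinely none exists) and \emph{termination}; soundness is essentially bookkeeping, while termination carries all the weight. For soundness of the first branch of the final conditional: when it is reached, $\mathit{DYN}(m,I,n)$ has a solution, so by the defining equations (1)--(5) of $\mathit{DYN}$ the returned accelerations $\acc_0(C),\dots,\acc_n(C)$ together with the phase durations $t_0,\dots,t_n$ describe an evolution of $\TS_0$ that meets every length constraint $\theta$ in every phase $s_i$ of $m$ while keeping all speeds and accelerations inside the bounds $\acc_{min},\acc_{max},\spd_{max}$; interleaving these continuous actions with the (uniquely determined, and by assumption easy to compute) discrete actions between consecutive phases yields an $\omega$ with $m(\TS_0,\omega)=m$. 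Conversely, if some $\omega$ with $k$ acceleration changes realises $m$, then padding it with null acceleration changes (i.e.\ $\acc(C,a)$ with $a$ the current acceleration of $C$) shows that $\mathit{DYN}(m,I,n)$ is solvable for every $n\ge k$; hence the ``no solution'' branch can only be wrong if the terminal value of $n$ is smaller than every such $k$, which is ruled out by the correctness-of-the-stopping-value step below.

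For termination, the outer loop runs exactly $m$ times, so it suffices that each inner loop terminates. Fix a prefix $m'=\langle p_1,\dots,p_i\rangle$. First, the three monitored quantities are monotone in $n$: $n+1$ acceleration changes subsume $n$ (make the extra one a null change), so $\mathit{max\_extension}(\mathit{DYN}'(m',I,n))$ is non-decreasing and the sets underlying $\mathit{max\_outcome\_pos}$ and $\mathit{max\_outcome\_spd}$ only grow under inclusion. Second, they are bounded: $\mathit{max\_extension}\le t$ since $I=[0,t]$ is fixed, and since every reachable speed lies in $[0,\spd_{max}]$ every reachable position difference stays within $\spd_{max}\cdot t$ of its initial value. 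The crucial point is that all three are \emph{eventually constant} in $n$ and that the first time two consecutive values agree they have already reached that constant value, so that the guard of the inner while-loop really detects saturation; I would prove this by making the reachable-set recurrence explicit. Let $\mathcal{R}_n$ be the set of pairs $(\xi,x)$ such that some trajectory with at most $n$ acceleration changes starts from $\TS_0$, keeps $\Delta\pos$ inside the regions of $m'$ on $[0,x]$ with the last (unreached) phase left unconstrained as in $\mathit{DYN}'$, and ends in configuration $\xi$ at time $x$. Then $\mathcal{R}_{n+1}=\mathcal{R}_n\cup\Phi(\mathcal{R}_n)$, where $\Phi$ appends one further free-acceleration segment; hence $\mathcal{R}_{n+1}=\mathcal{R}_n$ forces $\mathcal{R}_{n'}=\mathcal{R}_n$ for all $n'\ge n$ by a least-fixed-point argument, and the three monitored quantities are exactly projections of $\mathcal{R}_n$, so they freeze as soon as they stop changing for one step.

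For correctness of the stopping value it remains to show: if the inner loop for the full sequence $m$ settles at $n$, then $\mathit{DYN}(m,I,n)$ is solvable whenever \emph{any} number of acceleration changes makes it solvable. This reduces to the claim that solvability of $\mathit{DYN}(m,I,n)$ is determined by the three saturating quantities of $\mathit{DYN}'(m,I,n)$: $\mathit{DYN}(m,I,n)$ is solvable exactly when one can reach through the last phase's interval — $\mathit{max\_extension}(\mathit{DYN}'(m,I,n))$ equals the right endpoint of $I$ — and, at the times spent in that phase, the position differences demanded by its constraint $\theta$ are compatible with $\mathit{max\_outcome\_pos}(\mathit{DYN}'(m,I,n))$ and the speeds with $\mathit{max\_outcome\_spd}(\mathit{DYN}'(m,I,n))$. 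Stabilisation of the triple therefore forces stabilisation of the solvability of $\mathit{DYN}(m,I,n')$ in $n'$, and by the null-change argument this common value is ``no solution'' only if no number of acceleration changes works at all. Combining the three steps gives Lemma~\ref{lem:alg}: the algorithm always halts, and its verdict (a concrete acceleration schedule, or the non-existence of one) is correct.

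The one genuinely non-routine ingredient is the finiteness of the saturation index, i.e.\ that the monotone bounded sequences above are eventually constant rather than merely convergent. Example~\ref{ex:running} already shows more than one acceleration change can be necessary, and a narrow length band that must be held over a long phase shows that the number of useful changes per phase cannot be bounded by a universal constant; the bound must instead come from the piecewise-quadratic shape of $\Delta\pos$ under the fixed caps $\acc_{min},\acc_{max},\spd_{max}$ — once a configuration from which the current band can be held is reachable, or once it is provably unreachable within the remaining time, no further acceleration change can enlarge any of $\mathit{max\_extension}$, $\mathit{max\_outcome\_pos}$ or $\mathit{max\_outcome\_spd}$. Making this quantitative, and in particular ruling out an infinite strictly increasing approach to the threshold $\mathit{max\_extension}=t$, is where the proof has to do real work; everything else is monotonicity, a fixpoint argument, and unfolding the definition of $\mathit{DYN}$.
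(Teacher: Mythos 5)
Your proposal follows essentially the same route as the paper's (sketched) argument: monotonicity in $n$ of the three monitored quantities, a one-step ``no further change'' test as the inner-loop guard, prefix-wise maximisation over the phases, a final solvability check of $\mathit{DYN}$, and the null-padding observation for the converse soundness direction (which the paper leaves implicit, so that part is a useful addition). However, there is a genuine gap, and it sits exactly at the step that carries the lemma. Your fixed-point argument establishes only that if the full reachable set satisfies $\mathcal{R}_{n+1}=\mathcal{R}_n$ then it is frozen forever; but the loop guard compares the three \emph{projections} $\mathit{max\_extension}$, $\mathit{max\_outcome\_pos}$, $\mathit{max\_outcome\_spd}$, and agreement of these projections at $n$ and $n+1$ does not imply $\mathcal{R}_{n+1}=\mathcal{R}_n$: the joint reachable set can still grow while the marginals momentarily coincide, after which a projection could increase again at $n+2$. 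So the claim ``they freeze as soon as they stop changing for one step'' does not follow from the recurrence as you state it; it needs an argument specific to the piecewise-quadratic dynamics under the caps $\acc_{min},\acc_{max},\spd_{max}$ (e.g.\ convexity/interval structure of the reachable outcomes, or a bound on the number of useful switches per phase). You then, separately and candidly, leave open the eventual constancy of the monotone quantities (ruling out an infinite strictly increasing approach to $\mathit{max\_extension}=t$), which is the other half of termination.

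To be fair, the paper's own proof is only a sketch and asserts precisely these two points (``if none of these values increases any further within one iteration, no further iteration will'' and convergence ``against a solution that is maximal for the phase'') without supplying the detailed justification either; so your write-up reproduces the paper's reasoning at a comparable, in places greater, level of detail, but it does not close the saturation step --- and the fixed-point detour, as written, gives the impression of closing it while actually proving a weaker statement about the unmonitored sets rather than the monitored projections.
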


\begin{proof}(sketched)
	Alg.~\ref{alg:decidingACC} subsequently maximises the outcome that the dynamics (position and speed) may have after each phase of $ m $, iterating through prefixes $ m' $ of $ m,  $ solving $ \mathit{DYN}' $ for this prefix. The first line of the second while-condition (line 4) ensures that we have sufficiently many splitting points such that we reach the right (time) border of the phase. Line 5 ensures that we maximise the difference in the position between two cars at the end of the current phase $ p_i $, where line 6 maximises the differences in their speeds. If none of these values increases any further within  one iteration, no further iteration will. Aborting then is possible because the outcome of the dynamics is strictly monotone in the number of splitting points and thus has converged against a solution that is maximal for the phase.
\end{proof}

This result is needed in the next section:
\section{Enforcement}
\label{sec:enforcement}

We now present our -- in comparison to \cite{BO23}-- enhanced runtime enforcement approach, utilising the results from the previous section.

In this section, we show how cars can find actions to execute in a distributed manner s.t. the overall evolution of the traffic snapshot satisfies the specified properties up to some time bound. In distinction to previous work \cite{BO23}, where the cars non-deterministically guessed actions to execute, they now only propose timed sequences of states that are valid at least for their own properties. The proposed sequences $ \Pi_C $ are afterwards -- either by one of the cars or by another central entity -- combined into a set of timed sequences of states $ \Pi $, where each sequence represents a combined behaviour of all cars. It is then checked for the existence of a satisfying sequence of actions $ \omega $. If existent, the participating cars get informed over the timed actions $ \omega_c $ they themself must execute to comply to $ \omega $.

As the sequences $ \Pi $ are in the end checked by a single entity, one could argue that it would be easier to refrain from having the specification distributed over all cars. There are, however, several benefits that one gains when using the more distributed approach. First of all, the whole specification does not need to be known beforehand, neither to the other cars nor to the entity that checks if there is a satisfying sequence of actions. Being able to handle such cases is one big strength of runtime enforcement approaches.  Additionally, only the specification up to some time bound needs to be known, not the behaviour beyond that timed horizon, which might not be of interest to the others. Therefore, each car knows its whole specification completely and the central entity/other cars just enough to fulfil its/their task(s). While this argument mostly aimed at privacy concerns, we can also consider it in the light of complexity: The size of the region automaton is exponential in the size of the corresponding SC automaton, which itself is exponential in the size of the specification in SCL. When we considers that only the behaviour up to some time bound is of interest to us, a central SC automaton or even worse, region automaton, would be unnecessary large. 


We now focus on the question wherefrom the cars know which timed sequences of states to announce. This includes getting sequences of regions first (Lemma.~\ref{lemma:sequRegions}) and computing a satisfying timed sequence of states from them (Lemma.~\ref{lemma:SeqTimedStates}).

Before doing so, we would like to mention some results from \cite{BO23}: Given an region automaton $ \mathcal{R}(A_\psi) $ for a specification $ \psi $ in SCL, one can label some of the states as \emph{bad}, these are the once that, if reached, do not allow the run of the region automaton to be extended in a way that allow $ \psi $ to get satisfied. Vice versa, if a sequence ends in a state that is not bad, one can extend it in a way s.t. $ \psi $ is satisfied.

The second case, however, does not hold if we consider specifications $ \varphi $ in TMLSL rather than SCL: Here, it can be the case that the region automaton claims that there is an extension s.t. $ \varphi $ is satisfied, but the cars are not able to behave in a way that conforms to this extension. Thus, the sequences of regions that the region automaton suggests as satisfiable might actually not be satisfiable (but are candidates):

\begin{lemma}[Sequences of regions as potential solutions]
	\label{lemma:sequRegions}
	In every traffic snapshot $ \TS $, one can compute the set $ \Pi $ of sequences of regions $ \pi=\langle[\nu_i],\dots,[\nu_j]\rangle $  that start with the region $ [\nu_i] $ reached in the region automaton $ \mathcal{R}(A_\varphi) $ in the evolution towards $ \TS $ and are candidates for satisfying runs of the region automaton. Moreover, there is no sequence $\pi' $ not in $ \Pi $ but with $ m(\pi')\vDash_t\varphi $.
\end{lemma}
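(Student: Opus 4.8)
The plan is to construct $\Pi$ directly from the region automaton $\mathcal{R}(A_\varphi)$ and then argue completeness by appealing to the labelling of \emph{bad} states recalled just before the statement. First I would fix the traffic snapshot $\TS$ and follow the (unique) run of $\mathcal{R}(A_\varphi)$ that tracks the evolution of the system up to $\TS$; its final region is the promised starting region $[\nu_i]$. Since $\mathcal{R}(A_\varphi)$ has finitely many locations and regions, the set of \emph{finite} paths out of $[\nu_i]$ that avoid bad states and are consistent with the region-successor relation is a computable (though a priori infinite) object; I would present $\Pi$ as the set of such paths, possibly pruned to a finite representative set by the usual region-automaton pumping argument (any path long enough to revisit a (location, region) pair can be folded, so for the purpose of ``being a candidate'' it suffices to enumerate simple paths together with their cycles). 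Computability of $\Pi$ then reduces to: (i) constructing $A_\varphi$ from $\varphi$ via the SCL-to-SC-automaton translation of \cite{RS97}, (ii) constructing $\mathcal{R}(A_\varphi)$ as in \cite{RS97,AD94}, (iii) marking bad states by a backward reachability analysis on the B\"uchi acceptance structure, and (iv) reading off the paths from $[\nu_i]$ avoiding bad states.

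The second, harder half is the ``no $\pi'\notin\Pi$ with $m(\pi')\vDash_t\varphi$'' clause. Here I would argue contrapositively: suppose $\pi'$ is a sequence of regions with $m(\pi')\vDash_t\varphi$ in the finite semantics. By definition of $\vDash_t$ there is a suffix $m'$ with $m(\pi').m'\vDash_{SCL}\varphi$, i.e. an accepting infinite run of $A_\varphi$ (equivalently of $\mathcal{R}(A_\varphi)$) whose finite prefix realises exactly the regions of $\pi'$. An accepting run never gets stuck in a bad state — by the very definition of bad recalled above, a bad state cannot lie on any run that can still be extended to satisfy $\varphi$ — so every region of $\pi'$ is non-bad, and the region-successor steps of $\pi'$ are precisely transitions of $\mathcal{R}(A_\varphi)$; moreover $\pi'$ must begin in the region reached on the evolution towards $\TS$, since that prefix is forced by $\TS$. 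Hence $\pi'$ satisfies all the defining conditions of $\Pi$, so $\pi'\in\Pi$, contradicting $\pi'\notin\Pi$. I would be careful to phrase $\Pi$ so that it is closed under exactly these conditions (start region fixed by $\TS$, transitions of $\mathcal{R}(A_\varphi)$, no bad states), which makes this inclusion essentially definitional.

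The main obstacle I anticipate is reconciling \emph{finiteness/computability} of $\Pi$ with the fact that the $m(\pi')$ witnessing satisfaction in the finite semantics can have arbitrarily long region prefixes: strictly speaking there are infinitely many candidate region sequences. I would resolve this either by allowing $\Pi$ to be an infinite but effectively presented (regular) set — which is enough for ``one can compute the set $\Pi$'' in the sense of producing a finite automaton recognising it — or by observing that the downstream use (announcing sequences and checking them against $\mathit{DYN}$ via Theorem~\ref{thm:accChanges} and Algorithm~\ref{alg:decidingACC}) only ever needs bounded-length prefixes, so one fixes the relevant time bound $t$ and keeps only region sequences whose accumulated clock constraints are compatible with $[0,t]$, of which there are finitely many. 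A secondary subtlety is that the finite semantics quantifies over \emph{some} suffix $m'$; I must make sure that the non-badness characterisation of \cite{BO23} is stated for exactly this ``extendable to a model'' notion, so that ``$m(\pi')\vDash_t\varphi$'' and ``$\pi'$ ends in a non-bad region on a valid partial run'' coincide. Everything else — the translations, the region construction, the backward marking — is routine and can be cited.
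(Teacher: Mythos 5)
Your overall route matches the paper's intent: $\Pi$ is obtained by enumerating paths of $\mathcal{R}(A_\varphi)$ from the region(s) reached on the evolution towards $\TS$, avoiding bad states, and the \enquote{Moreover} clause follows because any $\pi'$ with $m(\pi')\vDash_t\varphi$ is the prefix of an accepting run and therefore meets exactly the defining conditions of $\Pi$. In fact you spell out the completeness direction and the finiteness/effective-presentation issue more explicitly than the paper does.

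There is, however, one genuine gap, and it concerns precisely the point the paper's proof is devoted to. You assume that the evolution towards $\TS$ induces a \emph{unique} run of $\mathcal{R}(A_\varphi)$ whose final region is \emph{the} starting region $[\nu_i]$. For SC automata this is not sound: the automaton is nondeterministic in its choice of future obligations, and the regions contain prophecy clocks $y_p$ whose values encode when propositions will hold \emph{next} --- values that cannot be read off from the past evolution at all. The paper's argument therefore computes (via Def.~3 of \cite{BO23}) the whole \emph{set} of locations $\{l_0,\dots,l_n\}$ that $A_\varphi$ may have reached, and for each such location takes \emph{all} regions satisfying the history-clock constraints $\Delta_x(l)$ with $p\in\mathcal{L}(l)\iff x_p=0$, explicitly ignoring the prophecy-clock constraints because the future is unknown. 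If you instead fix a single starting region, $\Pi$ misses every candidate sequence that begins in one of the other compatible (location, region) pairs, and then the contrapositive argument for the \enquote{Moreover} clause breaks down: a satisfying $\pi'$ may begin in a region your construction discarded, even though it is perfectly consistent with the observed evolution. To repair the proposal, replace \enquote{the unique run / its final region} by the set of all regions compatible with the history clocks of all reachable locations, and let the candidate paths start from any of these; the rest of your argument (bad-state avoidance, region-successor consistency, completeness via accepting runs) then goes through as you describe.
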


\begin{proof}
	Using Def.~3 of \cite{BO23}, we can compute the set of locations $ \{l_0,\dots,l_n\} $ that the SC automaton $ A_\varphi $ reaches along the evolution towards $ \TS $. Each location $ l\in \{l_0,\dots,l_n\} $ corresponds to a set of regions $ \{[\nu_1],\dots,[\nu_m]\} $, with $ [\nu_i]\vDash \Delta_x(l) $ that is, the region $ [\nu_i] $ satisfies the clock constraints over the history clocks of the location $ l $ and especially $ p\in\mathcal{L}(l_i)\ \iff\ [\nu_i](p)=0 $ for every proposition $ p $. We ignore constraints over the prophecy clocks here, because the future is (at least at the end of the sequence) unknown and the history clocks are sufficient for determining the intervals.
\end{proof}

Given a sequence of regions, we can compute a timed sequence of states that satisfies the sequence of regions:

\begin{lemma}[From regions to timed sequences of states]
	\label{lemma:SeqTimedStates}
	For every sequence of regions $ \pi=\langle[\nu_0],\dots,[\nu_n]\rangle  $ one can construct a timed sequence of state $ m(\pi)=\langle(s_0,I_0),\dots,(s_m,I_m) \rangle $ s.t. $ m\vDash\pi $.
\end{lemma}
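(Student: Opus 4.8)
The plan is to construct $m(\pi)$ from $\pi=\langle[\nu_0],\dots,[\nu_n]\rangle$ by reading off, for each region, which propositions (MLSLS-formulae) hold and what the admissible real-time interval between consecutive regions is. Recall that a region $[\nu_i]$ is a set of (in-)equalities over the history clocks $x_p$ and prophecy clocks $y_p$ for the propositions $p\in\mathcal{P}$, together with their integer parts. First I would set $s_i$ to be the set of propositions $p$ with $[\nu_i](x_p)=0$ (i.e.\ the propositions just made true in this region), consistent with the labelling $\mathcal{L}$ of the underlying SC automaton location, and close it propositionally/complete it in the subformulae of $\varphi$ exactly as the TMLSL semantics for $m(\TS,\omega)$ demands. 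For the timing, I would use the clock-difference information encoded in the region: the fractional-order and integer-part constraints on the history clocks pin down, for each pair of consecutive regions, a nonempty set of possible time-deltas $\delta_i$, and whether the endpoints of that delta are open or closed (the region records whether a clock value is exactly an integer or strictly between integers, which is precisely the distinction between $[a,b]$, $(a,b]$, $[a,b)$, $(a,b)$ seen in Ex.~\ref{ex:tmlsl}).

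Next I would fix a concrete choice: pick $I_0=[0,0]$ (or the appropriate initial interval) and then, inductively, having built $I_{i}$ with right endpoint $t_i$, choose a concrete $\delta_i$ in the admissible open/half-open/closed set of durations permitted by the transition from $[\nu_{i}]$ to $[\nu_{i+1}]$ and set $I_{i+1}$ to be the adjacent interval of length $\delta_i$ starting at $t_i$, with the open/closed shape dictated by the region data. This yields a monotonically increasing sequence of adjacent intervals, so $m(\pi)=\langle(s_0,I_0),\dots,(s_n,I_n)\rangle$ is a well-formed timed sequence of states. (Here $m$ in the statement equals $n$; the indices match because each region contributes one phase, after merging any stutter-regions that carry identical propositional labels and zero elapsed time, which is a routine bookkeeping step.)

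Finally I would verify $m\vDash\pi$, i.e.\ that this constructed sequence is indeed accepted by the region automaton along $\pi$: by construction the propositional labels $s_i$ agree with $\mathcal{L}$ at each location, the chosen interval lengths satisfy every clock constraint $\Delta$ attached to the regions (history clocks by direct construction; prophecy clocks are consistent because we only ever chose $\delta_i$ inside the region-permitted set, which already accounts for the prophecy constraints of the transition), and the integer parts advance as the region transitions require. Since $\pi$ is a finite sequence this is a finite check for each $i$, and soundness of the region construction of \cite{RS97,AD94} then gives that the timed sequence realises exactly the run $\pi$.

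The main obstacle I expect is the prophecy clocks: unlike history clocks, whose values at region $[\nu_i]$ are fully determined by the elapsed time since each proposition last held, the prophecy clocks constrain how far in the future the next occurrence of each proposition must lie, so a naive greedy choice of $\delta_i$ early in the sequence can make a later region's prophecy constraint unsatisfiable. The fix is to observe that the region automaton already only admits transitions whose combined history-and-prophecy constraints are jointly satisfiable along the whole of $\pi$ (that is what it means for $\pi$ to be a run), so the per-step admissible delta-sets are nonempty \emph{and} can be chosen compatibly; making this precise amounts to picking $\delta_i$ from the projection of the (nonempty) polytope of globally consistent duration vectors onto coordinate $i$, proceeding left to right and updating the remaining polytope — a standard argument for region automata that I would cite rather than reprove.
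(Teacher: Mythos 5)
Your construction matches the paper's own proof: the paper likewise assumes a global (unreset) clock, reads off which propositions hold at each point from constraints of the form $x_p=0$, and determines the open/closed shape of the interval borders from the region (in-)equalities over the prophecy clocks ($y_p=1$ giving a closed border, $y_p<1$ an open one) and analogously the history clocks. Your additional discussion of choosing the durations $\delta_i$ so that the prophecy-clock constraints remain jointly satisfiable along the whole run is extra care that the paper's brief sketch leaves implicit, but it is the same approach, not a different one.
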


\begin{proof}
	For simplicity, we assume that there is a global clock that is not reset, counting the time from the beginning of the sequence. Starting with $ [\nu_0] $ and subsequently going trough all $ [\nu_i] $, we determine for each point in time $ t $ which propositions $ p $ are valid in it, which is achieved by looking at formulae of the form $ x_p=0 $. To determine the shape of the intervals ($ [],(),[)\text{ or }(] $), we consider the (in-)equalities in the regions: If some $ p $ is valid in the next point in time, $ y_p=1 $ leads to a closed interval border (\enquote{]}),  $ y_p<1 $ leads to an open one (\enquote{)}). We do the same for the history clocks $ x_p $.
\end{proof}

As all cars announce timed sequences of states, we need to combine them into a single sequence that the central entity can check:

\begin{lemma}[Combining Sequences of states]
	\label{lem:combining}
	Given two finite timed sequences of states $ m_1 $ and $ m_2 $, one can construct a timed sequence $ m $ s.t. for every $ \varphi:\ m_1\vDash\varphi \lor m_2\vDash \varphi \implies m\vDash\varphi$.
\end{lemma}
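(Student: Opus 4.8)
The plan is to build $m$ by \emph{overlaying} the two sequences on a common time axis and taking, at each instant, the union of the two state labels. First I would normalise the two sequences to share a common extension: if $m_1$ runs over $[0,t_1]$ and $m_2$ over $[0,t_2]$ with, say, $t_1\le t_2$, replace $m_1$ by $m_1$ followed by a single state carrying $m_1$'s last label over the interval that fills up to $t_2$ (in the enforcement setting this step is unnecessary, as all cars announce sequences up to the same time horizon). Call the common extension $[0,t]$ and let $0=\tau_0<\tau_1<\dots<\tau_k=t$ be the finite set of all interval endpoints occurring in $m_1$ or in $m_2$. For each $\tau_j$ introduce the singleton state $(\ell_1(\tau_j)\cup\ell_2(\tau_j),[\tau_j,\tau_j])$, where $\ell_i(\tau_j)\subseteq\Sigma$ is the set of MLSLS-propositions that $m_i$ assigns to the instant $\tau_j$; and for each open gap introduce the state $(\ell_1(\tau)\cup\ell_2(\tau),(\tau_j,\tau_{j+1}))$ for an arbitrary $\tau\in(\tau_j,\tau_{j+1})$ (both $\ell_1$ and $\ell_2$ are constant there by choice of the $\tau_j$). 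This yields a genuine timed sequence of states $m$: finitely many states, with adjacent, monotonically increasing intervals. For the label union to be propositionally consistent and complete one uses that $m_1$ and $m_2$ are consistent with one another -- no MLSLS-proposition is declared true by one and false by the other at the same instant -- which holds in our setting because each announced sequence refines the common evolution towards $\TS$ and the two specifications range over (essentially) disjoint sets of MLSLS-propositions; then $m$ is complete in the union of the subformula sets of the two specifications and agrees with $m_i$ on $m_i$'s own propositions at every instant of $[0,t]$.

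Second I would show $m_1\vDash\varphi\implies m\vDash\varphi$ (the case of $m_2$ is symmetric). The key invariant is: for every MLSLS-proposition $p$ occurring in $\varphi$ and every $t'\in[0,t]$, $p$ holds at $t'$ in $m$ iff it holds at $t'$ in $m_1$ -- immediate from the construction, since the only thing $m$ adds to $m_1$'s labels are propositions of the other specification, which do not occur in $\varphi$. Hence the truth value, as a function of real time, of \emph{every} subformula of $\varphi$ coincides for $m$ and for $m_1$; the two interpretations differ only in that $m$ subdivides some of $m_1$'s intervals into finer pieces across which no relevant truth value changes. A structural induction on $\varphi$ then gives $(m_1,i_1,t')\vDash_{SCL}\psi\iff(m,i,t')\vDash_{SCL}\psi$ for corresponding positions $i_1$ in $m_1$ and $i$ in $m$ and every subformula $\psi$: the Boolean cases are trivial; for the until- and since-operators one uses that they are insensitive to a refinement of the underlying interval partition (they depend only on the real-time labelling); for $\rtnext{\sim c}{\psi}$ and $\rtlast{\sim c}{\psi}$ one uses in addition that the left border of the first future (resp. last past) interval in which $\psi$ holds is the same real number in $m$ and in $m_1$, because refining the partition does not move the infimum of the set of instants at which $\psi$ holds, nor which of the four interval shapes occurs there. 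Finally, lifting from $\vDash_{SCL}$ to the finite semantics of Sect.~\ref{ssec:TMLSL} is immediate: if $m_1.m_1'\vDash_{SCL}\varphi$ for a witnessing suffix $m_1'$, extend the labels of $m_1'$ to complete types over the union alphabet (arbitrarily but consistently -- possible since $\varphi$ does not constrain the added propositions) and append this suffix to $m$; the same induction gives $m.m_1'\vDash_{SCL}\varphi$, hence $m\vDash\varphi$.

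I expect the main obstacle to be the interaction between the pointwise union of labels -- which can only \emph{add} true propositions -- and the subformulae that are sensitive to this, namely negations and the ``all intermediate states fail $\psi$'' clause hidden in the semantics of $\rtnext{\sim c}{\psi}$ and $\rtlast{\sim c}{\psi}$: in principle, adding a proposition could make an intermediate state satisfy such a $\psi$ and thereby change which future (or past) state the operator points at. Ruling this out is exactly the role of the consistency/disjointness assumption on $m_1$ and $m_2$, so making that assumption precise -- and arguing that it is met along the enforcement pipeline, where every announced $m_i$ extends the run already fixed up to $\TS$ -- is the delicate point. The remaining work, namely the normalisation of extensions and the bookkeeping of interval shapes at the inserted cut points, is routine.
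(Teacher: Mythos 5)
Your construction is essentially the paper's own proof: the paper likewise builds $m$ by overlaying $m_1$ and $m_2$ on a refined common time axis, filling each (point or open) interval with the union of the propositions holding there, fusing equal neighbouring states, and finally checking propositional consistency, with a contradiction (e.g.\ $\mathit{cl}(A)$ and $\lnot \mathit{cl}(A)$ in one interval) rendering the combined sequence unsatisfiable rather than being excluded by a disjointness assumption as in your write-up. The semantic-preservation induction you add (refinement-invariance of the SCL operators and the suffix argument for the finite semantics) is left implicit in the paper, so your proposal is a more detailed version of the same route, not a different one.
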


\begin{proof}
	We start with an \enquote{empty} sequence $ m=\langle(\_,[0,0]),(\_(0,1)),(\_,[1,1]),(\_,(1,2)), \dots(\_,I_n) \rangle $. Going through each state $ (\_,I_i) $ of $ m $, we look in both $ m_1 $ and $ m_2 $ and insert the propositions from the states $ (s_j,I_j) $, where $ I_j $ contains $ I_i $. If it happens that for some two neighbouring states $ (s_i,I_i) $ and $ (s_{i+1},I_{i+1}), \ s_{i}=s_{i+1}$ holds, we can fuse the two into a single state $ (s_i, I_i+I_{i+1}) $.	As a last step, we check if the resulting sequence is consistent. If it happens that there is a contradiction in one of the states, say $ \cl{A} $ and $ \lnot \cl{A} $ need to hold in the same (time) interval, the timed sequence of states cannot be satisfied at all and is thus invalid. 
\end{proof}

Please note that the other direction not necessarily holds, as e.g. $ p1\land p_2 $ could hold in $ [1,1] $ of $ m $, but in $ m_1 $ only $ p_1 $ and in $ m_2 $ only $ p_2 $ holds in the respective interval.

We utilise the aforementioned results in the controllers of the cars and the central entity that determines whether a solution exists. The controller is depicted in Fig.~\ref{fig:controller} and the central entity $\mathit{RSU}$ (Road-Side Unit) in Fig.~\ref{fig:centralEntity}.

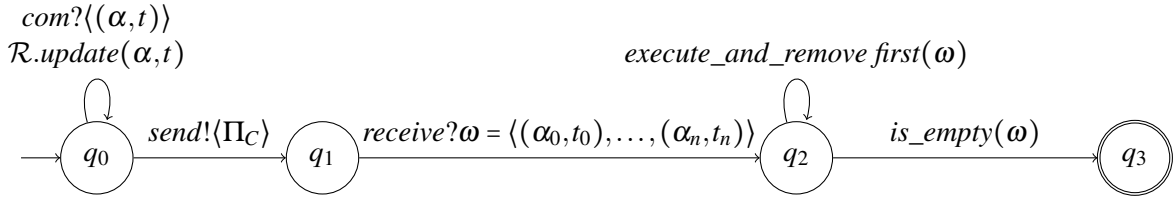
\begin{figure}[h!]
	\begin{tikzpicture}[node distance =45mm,initial text=,]
		\node[state,initial] (n0) {$ q_0 $};
		\node[state,right of= n0,xshift=-15mm] (n1) {$ q_1 $};
		\node[state,right of = n1,xshift=18mm] (n2) {$ q_2 $};
		\node[state,right of = n2,xshift=0mm,accepting] (n3) {$ q_3 $};

		\path[->]
		(n0) edge[loop above] node[midway,above,align=center]{$ \mathit{com}?\langle(\alpha,t)\rangle $\\ $ \mathcal{R}.\mathit{update}(\alpha,t) $}(n0)
		(n0) edge node[midway,above,align=center]{$ \mathit{send!}\langle\Pi_C\rangle $}(n1)
		(n1) edge node [midway,above]{$ \mathit{receive}? \omega= \langle (\alpha_0,t_0),\dots,(\alpha_n,t_n)\rangle $}(n2)
		(n2) edge[loop above] node[midway,above]{$ \mathit{execute\_and\_remove\ first}(\omega) $}(n2)
		(n2) edge node[midway,above]{$ \mathit{is\_empty}(\omega) $}(n3)
		;
	\end{tikzpicture}
	\caption{Controller for each car $ C $, each is equipped with an instance of it. The controller keeps track of the traffic situation in $ q_0 $ and updates the sequences it could announce accordingly. When announcing the sequences $ \Pi_C $, it proceeds to $ q_1 $ and waits for a positive response from the central entity and executes the sequence of actions $ \omega $ that it received from there, until there are not further actions to execute. Please note that we omitted clock constraints to actually force the controller to leave a state.}
	\label{fig:controller}
\end{figure}

Both of them use several functions on their transitions. In the controller, $ \mathcal{R}.\mathit{update}(\alpha,t) $ is used so that the internal region automaton keeps track about the behaviour on the road and thus is in the correct state(s), before the enforcement mechanism is triggered. In this location, we may already have an evolution that leads to a state s.t. all further extensions are unsatisfiable. If such a behaviour is undesired, constraints should be added s.t. one does not stay in this location.  $ \mathit{execute\_and\_remove\_first}(\omega) $ takes the first time stamped action $ (\alpha_1,t_1) $ from the action sequence $ \omega=\langle(\alpha_1,t_1),\dots,(\alpha_n,t_n)\rangle $, waits until the clock reaches $ t_1 $ and executes $ \alpha_1 $. Afterwards, this element is removed from $ \omega $, so that the next action is ready to be executed. $ \mathit{is\_empty}(\omega) $ is true for the empty sequence $ \langle\rangle $.

In $ \mathit{RSU} $, the function $ \mathit{D.push}(\Pi_C) $ is used to internally store the announced sets of timed sequences of states $ \Pi_C $ for each car $ C $ in some data structure $ D $. The set of sequences representing all possible satisfying sequences for all cars is constructed using $ \mathit{combine}(C_1,\dots,C_n) $ and stored in $ \Pi $. Using Alg.~\ref{alg:decidingACC}, it can then decide whether or not one of the sequences in $ \Pi $ is one for which there is a satisfying sequence of actions. If so, the solution $ \omega $ is computed and afterwards split into single solutions $ \omega_c $ for each car $ c $, so that every car only gets informed of the actions it itself has to execute. After informing a car, it is removed from the data structure $ D $.

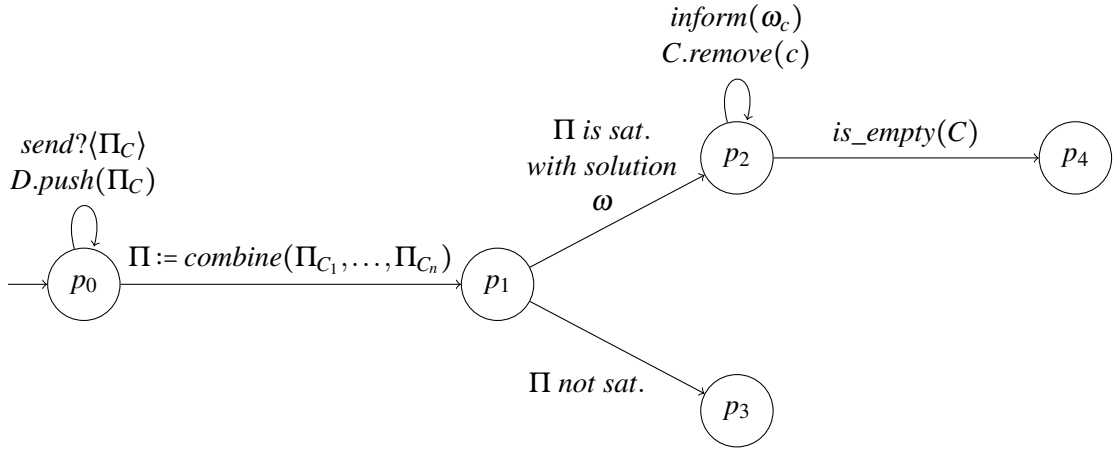
\begin{figure}
	\begin{tikzpicture}[node distance =45mm,initial text=,]
		\node[state,initial] (n0) {$ p_0 $};
		\node[state,right of=n0,xshift=10mm] (n1) {$ p_1 $};
		\node[state, above right of = n1,yshift=-15mm] (n2) {$ p_2 $};
		\node[state, below right of = n1,yshift=15mm] (n3) {$ p_3 $};
		\node[state, right of= n2] (n4) {$ p_4 $};
		
		\path[->]
		(n0) edge[loop above] node[midway,above,align=center]{$ \mathit{send}?\langle\Pi_C\rangle $\\$ \mathit{D.push}(\Pi_C) $}(n0)
		(n0) edge[] node[midway,above,align=center]{$ \Pi:=\mathit{combine}(\Pi_{C_1},\dots,\Pi_{C_n}) $}(n1)
		(n1) edge[] node[above,midway,align=center,xshift=-2mm]{$ \Pi \mathit{\ is\ sat.}$ \\ $\mathit{with\ solution}$\\ $ \omega$}(n2)
		(n1) edge[] node[midway, below,xshift=-4mm,yshift=-2mm]{$ \Pi\ \mathit{not\ sat.} $}(n3)
		(n2) edge[loop above] node[midway,above,align=center]{$\mathit{inform}(\omega_c)$\\$ C.\mathit{remove(c)} $} (n2)
		(n2) edge[] node[midway, above]{$\mathit{is\_empty}(C)$}(n4)
		;
	\end{tikzpicture}
\caption{Central Entity $ \mathit{RSU} $. $ \mathit{RSU} $ waits in $ p_0 $ for announced sequences $ \Pi_{C_i} $ from the cars and combines them into a single set of sequences $ \Pi $, which it than can check for a satisfying run. If positively answered, it sends the sequence to execute to each car.}
\label{fig:centralEntity}
\end{figure}

The communication takes place over the channels $ \mathit{com} $, over which the cars announce action they execute before the enforcement mechanism is triggered. Channel $ \mathit{send} $ is used to inform $ \mathit{RSU}$ about the possible plans $ \Pi_c $ of each car. $ \mathit{receive} $ is used for the opposite direction, informing the cars which actions to execute.

\begin{theorem}[Correct- and Soundness]
	If the controller from Fig.~\ref{fig:controller} proceeds to location $ q_3 $, the specification of all cars is satisfied up to the given time bound $ t $. If it cannot proceed to location $ q_3 $, then there is no sequence of actions for the cars to take that respects the specification of all cars.
\end{theorem}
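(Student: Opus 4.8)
The plan is to split the statement into its two implications and to trace, in each case, the joint behaviour of the per-car controllers (Fig.~\ref{fig:controller}) and the road-side unit (Fig.~\ref{fig:centralEntity}), plugging in the four auxiliary results established above.

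For the first implication (soundness) I would observe that a controller can only arrive in $q_3$ after it has passed through $q_1$ and $q_2$: it announced some $\Pi_C$ on channel $\mathit{send}$, it received on $\mathit{receive}$ an action sequence $\omega_c$, and it has executed every element of $\omega_c$ via $\mathit{execute\_and\_remove\_first}$. The RSU, by construction, only emits such sequences after reaching $p_2$, i.e.\ after $\Pi:=\mathit{combine}(\Pi_{C_1},\dots,\Pi_{C_n})$ turned out to contain a timed sequence of states $m$ for which Alg.~\ref{alg:decidingACC} (correct by Lemma~\ref{lem:alg}) produced a global action sequence $\omega$ with $m(\TS_0,\omega)=m$, which it then split into the per-car parts $\omega_c$. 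Since each car only announces timed sequences of states obtained from candidate region sequences of its own region automaton $\mathcal{R}(A_{\varphi_C})$ (Lemmas~\ref{lemma:sequRegions} and \ref{lemma:SeqTimedStates}), the sequence $m_C\in\Pi_C$ that $\mathit{combine}$ used for car $C$ satisfies $\varphi_C$; by Lemma~\ref{lem:combining} the combined $m$ then satisfies every $\varphi_C$ simultaneously. As the actions in $\omega$ are local to individual cars and the only synchronising action (time passing) is governed by the shared clock, executing all $\omega_c$ concurrently realises exactly $\omega$; hence the actual evolution equals $m(\TS_0,\omega)=m$, which satisfies every $\varphi_C$ up to the time bound $t$ by the definition of the finite semantics. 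This gives the first claim.

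For the second implication (completeness) I would argue contrapositively: assume there is a sequence of actions $\omega$ for the cars that respects the specification of all cars up to $t$, and show the controllers can reach $q_3$. Let $m^\ast=m(\TS_0,\omega)$ be the induced evolution. For each car $C$, reading off the history clocks along $m^\ast$ yields a sequence of regions $\pi_C$ of $\mathcal{R}(A_{\varphi_C})$; since $m^\ast$ witnesses $\TS_0,\omega\vDash_t\varphi_C$, this $\pi_C$ is not one of the sequences excluded in Lemma~\ref{lemma:sequRegions}, so it lies in the candidate set announced by $C$ and, via Lemma~\ref{lemma:SeqTimedStates}, the corresponding timed sequence of states $m(\pi_C)$ is among the announced $\Pi_C$. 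Consequently $\mathit{combine}$ produces, among others, a combined sequence $m$ that by Lemma~\ref{lem:combining} satisfies every $\varphi_C$ and that is consistent with $m^\ast$; the witness $\omega$ shows that $\mathit{DYN}(m,I,n)$ has a solution for a suitable $n$, so Alg.~\ref{alg:decidingACC} returns a realising action sequence, the RSU moves to $p_2$, informs every car, and every controller executes its part and proceeds through $q_2$ to $q_3$.

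The main obstacle I expect is the completeness step, namely bridging the region abstraction with the concrete dynamics: one must ensure that the timed sequence of states reconstructed from the combined region sequences is not merely satisfied by $\omega$ in the logical sense but is also \emph{realisable} — that the length constraints it carries are exactly those that $m^\ast$ already meets, so $\omega$ is a genuine solution of the associated $\mathit{DYN}$ system and Alg.~\ref{alg:decidingACC} cannot fail. A secondary point to be careful about is that $\mathit{combine}$ and the region-to-state reconstruction may coarsen interval shapes and proposition sets, so one should verify that this coarsening never turns a satisfiable situation into an inconsistent one — this is exactly where the consistency check at the end of Lemma~\ref{lem:combining} has to be invoked. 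Reassembling the per-car action sequences $\omega_c$ into the global $\omega$ is routine once one notes the locality of acceleration, claim and reservation actions and the synchrony of time passing.
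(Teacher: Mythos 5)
Your proposal follows essentially the same route as the paper's own (much terser) proof: both chain Lemma~\ref{lemma:sequRegions} (all candidate region sequences are announced), Lemma~\ref{lemma:SeqTimedStates} (conversion to timed sequences of states), Lemma~\ref{lem:combining} (all combinations are considered by $\mathit{RSU}$) and Lemma~\ref{lem:alg} (the returned action sequence is correct), with soundness from the last lemma and completeness from the exhaustiveness of the first three. Your elaboration of the completeness direction and your remark on the realisability gap between the region abstraction and the concrete dynamics is more explicit than the paper, which simply asserts that all combinations are checked, but it is the same argument.
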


\begin{proof}
	Due to Lemma~\ref{lemma:sequRegions}, we get all possible satisfying sequences of regions of cars. Due to Lemma~\ref{lemma:SeqTimedStates}, we can compute equivalent timed sequences of states from them. Lemma~\ref{lem:combining} lets us combine them towards some $ \Pi $ on the transition from $ p_0 $ to $ p_1 $ in $ \mathit{RSU} $ s.t. all possible combinations of solutions for all cars are considered. Each of them is checked using Alg.~\ref{alg:decidingACC}, so the due to Lemma~\ref{lem:alg}, the solution found is a correct one.
\end{proof}


If $ \mathit{RSU} $ reaches location $ p_3 $, the specification is unsatisfiable, so there is no sequence of actions for the cars to execute. In this case collision freedom can still be guaranteed (assuming that there were no collisions yet), as the reservations of the cars occupy a space big enough to come to a standstill within that space.

\begin{remark}[Number of sequences to consider]
	If all cars announce all timed sequences of states that satisfy their specification, the central entity/road-side unit needs to check all combinations of these sequences (with exactly one sequence in the combination for each car), resulting in a lot of computation. However, these computations do not depend on each other and can thus be parallelised. If we consider that the cars themself compute this, rather than some road cite unit, one can think of a more advanced protocol than the one proposed here, where the cars distribute the sequences to check and thus the computational effort between each other.
\end{remark}

\begin{remark}[Discrete Actions]
	Through both Sect.~\ref{sec:decidability} and Sect.~\ref{sec:enforcement}, we only considered how the cars can change their accelerations to ensure that they satisfy the length measurements in the specification, ignoring the discrete actions completely. As said, given a timed sequence of states, it is easy to see what discrete actions are to execute when, as they directly change the formulae valid and thus the phase. Some of them, however, need to be respected when constructing the length comparisons $ \theta_i $ that we check in $ \mathit{DYN} $.
\end{remark}

\section{Conclusion}

\paragraph{Contribution}
In this paper, we proposed a runtime enforcement approach for autonomous car in motorway traffic, employing communication between the cars, where the knowledge about the satisfaction of a property is represented using a region automaton. In answering the question whether or not a specification (now in the form of a timed sequence of states) is satisfiable, we were able to eliminate one of the roots for the semi-decidability of the satisfiability problem of TMLSL.

\paragraph{Future Work}
Future work on the topic includes studying the satisfiability problem of TMLSL again, in an effort to show that the logic is indeed decidable over infinite runs. Further topics also include the extension of the logic and the proposed runtime enforcement approach towards the aforementioned more complex road topologies. Both of them offer some challenges in the semantics and runtime enforcement, as their models are more complicated that the ones for motorway traffic. For urban traffic, the assumption that there is a central entity that all cars can communicate with is not too far from reality, as on almost all intersections traffic lights are present, some of which already communicate with the buses that cross/approach them.

Steps towards an implementation for solving the decidability problem of $ TMLSL $ were made and could be adjusted to be used in the runtime enforcement setting. With an implementation, we could also examine if the proposed approach is suited for real-time applications like car control on motorways, e.g. the computation happens fast enough.

\paragraph{Acknowledgements.}
We thank the anonymous reviewers for their helpful comments.

\bibliographystyle{eptcs}
\bibliography{./bib.bib}

\begin{thebibliography}{10}
\providecommand{\bibitemdeclare}[2]{}
\providecommand{\surnamestart}{}
\providecommand{\surnameend}{}
\providecommand{\urlprefix}{Available at }
\providecommand{\url}[1]{\texttt{#1}}
\providecommand{\href}[2]{\texttt{#2}}
\providecommand{\urlalt}[2]{\href{#1}{#2}}
\providecommand{\doi}[1]{doi:\urlalt{https://doi.org/#1}{#1}}
\providecommand{\eprint}[1]{arXiv:\urlalt{https://arxiv.org/abs/#1}{#1}}
\providecommand{\bibinfo}[2]{#2}

\bibitemdeclare{article}{AMP21}
\bibitem{AMP21}
\bibinfo{author}{Matthias \surnamestart Althoff\surnameend},
  \bibinfo{author}{Sebastian \surnamestart Maierhofer\surnameend} \&
  \bibinfo{author}{Christian \surnamestart Pek\surnameend}
  (\bibinfo{year}{2021}): \emph{\bibinfo{title}{Provably-Correct and
  Comfortable Adaptive Cruise Control}}.
\newblock {\slshape \bibinfo{journal}{{IEEE} Trans. Intell. Veh.}}
  \bibinfo{volume}{6}(\bibinfo{number}{1}), pp. \bibinfo{pages}{159--174},
  \doi{10.1109/TIV.2020.2991953}.

\bibitemdeclare{article}{AD94}
\bibitem{AD94}
\bibinfo{author}{Rajeev \surnamestart Alur\surnameend} \&
  \bibinfo{author}{David~L. \surnamestart Dill\surnameend}
  (\bibinfo{year}{1994}): \emph{\bibinfo{title}{A Theory of Timed Automata}}.
\newblock {\slshape \bibinfo{journal}{Theor. Comput. Sci.}}
  \bibinfo{volume}{126}(\bibinfo{number}{2}), pp. \bibinfo{pages}{183--235},
  \doi{10.1016/0304-3975(94)90010-8}.

\bibitemdeclare{inproceedings}{BO22}
\bibitem{BO22}
\bibinfo{author}{Christopher \surnamestart Bischopink\surnameend} \&
  \bibinfo{author}{Ernst{-}R{\"{u}}diger \surnamestart Olderog\surnameend}
  (\bibinfo{year}{2022}): \emph{\bibinfo{title}{Spatial and Timing Properties
  in Highway Traffic}}.
\newblock In \bibinfo{editor}{Helmut \surnamestart Seidl\surnameend},
  \bibinfo{editor}{Zhiming \surnamestart Liu\surnameend} \&
  \bibinfo{editor}{Corina~S. \surnamestart Pasareanu\surnameend}, editors:
  {\slshape \bibinfo{booktitle}{Theoretical Aspects of Computing - {ICTAC} 2022
  - 19th International Colloquium, Tbilisi, Georgia, September 27-29, 2022,
  Proceedings}}, {\slshape \bibinfo{series}{Lecture Notes in Computer Science}}
  \bibinfo{volume}{13572}, \bibinfo{publisher}{Springer}, pp.
  \bibinfo{pages}{114--131}, \doi{10.1007/978-3-031-17715-6\_9}.

\bibitemdeclare{inbook}{BO23}
\bibitem{BO23}
\bibinfo{author}{Christopher \surnamestart Bischopink\surnameend} \&
  \bibinfo{author}{Ernst-R{\"u}diger \surnamestart Olderog\surnameend}
  (\bibinfo{year}{2023}): \emph{\bibinfo{title}{Time for Traffic Manoeuvres}},
  pp. \bibinfo{pages}{163--179}.
\newblock \bibinfo{publisher}{Springer Nature Switzerland},
  \bibinfo{address}{Cham}, \doi{10.1007/978-3-031-40132-9_11}.

\bibitemdeclare{inproceedings}{BS19}
\bibitem{BS19}
\bibinfo{author}{Christopher \surnamestart Bischopink\surnameend} \&
  \bibinfo{author}{Maike \surnamestart Schwammberger\surnameend}
  (\bibinfo{year}{2019}): \emph{\bibinfo{title}{Verification of Fair
  Controllers for Urban Traffic Manoeuvres at Intersections}}.
\newblock In \bibinfo{editor}{Emil \surnamestart Sekerinski\surnameend},
  \bibinfo{editor}{Nelma \surnamestart Moreira\surnameend},
  \bibinfo{editor}{Jos{\'{e}}~N. \surnamestart Oliveira\surnameend},
  \bibinfo{editor}{Daniel \surnamestart Ratiu\surnameend},
  \bibinfo{editor}{Riccardo \surnamestart Guidotti\surnameend},
  \bibinfo{editor}{Marie \surnamestart Farrell\surnameend},
  \bibinfo{editor}{Matt \surnamestart Luckcuck\surnameend},
  \bibinfo{editor}{Diego \surnamestart Marmsoler\surnameend},
  \bibinfo{editor}{Jos{\'{e}}~Creissac \surnamestart Campos\surnameend},
  \bibinfo{editor}{Troy \surnamestart Astarte\surnameend},
  \bibinfo{editor}{Laure \surnamestart Gonnord\surnameend},
  \bibinfo{editor}{Antonio \surnamestart Cerone\surnameend},
  \bibinfo{editor}{Luis \surnamestart Couto\surnameend},
  \bibinfo{editor}{Brijesh \surnamestart Dongol\surnameend},
  \bibinfo{editor}{Martin \surnamestart Kutrib\surnameend},
  \bibinfo{editor}{Pedro \surnamestart Monteiro\surnameend} \&
  \bibinfo{editor}{David \surnamestart Delmas\surnameend}, editors: {\slshape
  \bibinfo{booktitle}{Formal Methods. {FM} 2019 International Workshops -
  Porto, Portugal, October 7-11, 2019, Revised Selected Papers, Part {I}}},
  {\slshape \bibinfo{series}{Lecture Notes in Computer Science}}
  \bibinfo{volume}{12232}, \bibinfo{publisher}{Springer}, pp.
  \bibinfo{pages}{249--264}, \doi{10.1007/978-3-030-54994-7\_18}.

\bibitemdeclare{inproceedings}{FHO15}
\bibitem{FHO15}
\bibinfo{author}{Martin \surnamestart Fr{\"{a}}nzle\surnameend},
  \bibinfo{author}{Michael~R. \surnamestart Hansen\surnameend} \&
  \bibinfo{author}{Heinrich \surnamestart Ody\surnameend}
  (\bibinfo{year}{2015}): \emph{\bibinfo{title}{No Need Knowing Numerous
  Neighbours - Towards a Realizable Interpretation of {MLSL}}}.
\newblock In \bibinfo{editor}{Roland \surnamestart Meyer\surnameend},
  \bibinfo{editor}{Andr{\'{e}} \surnamestart Platzer\surnameend} \&
  \bibinfo{editor}{Heike \surnamestart Wehrheim\surnameend}, editors: {\slshape
  \bibinfo{booktitle}{Correct System Design}}, {\slshape
  \bibinfo{series}{Lecture Notes in Computer Science}} \bibinfo{volume}{9360},
  \bibinfo{publisher}{Springer}, pp. \bibinfo{pages}{152--171},
  \doi{10.1007/978-3-319-23506-6_11}.

\bibitemdeclare{inbook}{H08}
\bibitem{H08}
\bibinfo{author}{Klaus \surnamestart Havelund\surnameend} \&
  \bibinfo{author}{Allen \surnamestart Goldberg\surnameend}
  (\bibinfo{year}{2008}): \emph{\bibinfo{title}{Verify Your Runs}}, pp.
  \bibinfo{pages}{374--383}.
\newblock \bibinfo{publisher}{Springer Berlin Heidelberg},
  \bibinfo{address}{Berlin, Heidelberg}, \doi{10.1007/978-3-540-69149-5_40}.

\bibitemdeclare{inproceedings}{HLO13}
\bibitem{HLO13}
\bibinfo{author}{Martin \surnamestart Hilscher\surnameend},
  \bibinfo{author}{Sven \surnamestart Linker\surnameend} \&
  \bibinfo{author}{Ernst{-}R{\"{u}}diger \surnamestart Olderog\surnameend}
  (\bibinfo{year}{2013}): \emph{\bibinfo{title}{Proving Safety of Traffic
  Manoeuvres on Country Roads}}.
\newblock In \bibinfo{editor}{Zhiming \surnamestart Liu\surnameend},
  \bibinfo{editor}{Jim \surnamestart Woodcock\surnameend} \&
  \bibinfo{editor}{Huibiao \surnamestart Zhu\surnameend}, editors: {\slshape
  \bibinfo{booktitle}{Theories of Programming and Formal Methods - Essays
  Dedicated to Jifeng He on the Occasion of His 70th Birthday}}, {\slshape
  \bibinfo{series}{Lecture Notes in Computer Science}} \bibinfo{volume}{8051},
  \bibinfo{publisher}{Springer}, pp. \bibinfo{pages}{196--212},
  \doi{10.1007/978-3-642-39698-4\_12}.

\bibitemdeclare{inproceedings}{HLOR11}
\bibitem{HLOR11}
\bibinfo{author}{Martin \surnamestart Hilscher\surnameend},
  \bibinfo{author}{Sven \surnamestart Linker\surnameend},
  \bibinfo{author}{Ernst{-}R{\"{u}}diger \surnamestart Olderog\surnameend} \&
  \bibinfo{author}{Anders~P. \surnamestart Ravn\surnameend}
  (\bibinfo{year}{2011}): \emph{\bibinfo{title}{An Abstract Model for Proving
  Safety of Multi-lane Traffic Manoeuvres}}.
\newblock In \bibinfo{editor}{Shengchao \surnamestart Qin\surnameend} \&
  \bibinfo{editor}{Zongyan \surnamestart Qiu\surnameend}, editors: {\slshape
  \bibinfo{booktitle}{Formal Methods and Software Engineering - 13th
  International Conference on Formal Engineering Methods, {ICFEM} 2011, Durham,
  UK, October 26-28, 2011. Proceedings}}, {\slshape \bibinfo{series}{Lecture
  Notes in Computer Science}} \bibinfo{volume}{6991},
  \bibinfo{publisher}{Springer}, pp. \bibinfo{pages}{404--419},
  \doi{10.1007/978-3-642-24559-6\_28}.

\bibitemdeclare{inproceedings}{LMT15}
\bibitem{LMT15}
\bibinfo{author}{Kim~Guldstrand \surnamestart Larsen\surnameend},
  \bibinfo{author}{Marius \surnamestart Mikucionis\surnameend} \&
  \bibinfo{author}{Jakob~Haahr \surnamestart Taankvist\surnameend}
  (\bibinfo{year}{2015}): \emph{\bibinfo{title}{Safe and Optimal Adaptive
  Cruise Control}}.
\newblock In \bibinfo{editor}{Roland \surnamestart Meyer\surnameend},
  \bibinfo{editor}{Andr{\'{e}} \surnamestart Platzer\surnameend} \&
  \bibinfo{editor}{Heike \surnamestart Wehrheim\surnameend}, editors: {\slshape
  \bibinfo{booktitle}{Correct System Design, September 8-9, 2015.
  Proceedings}}, {\slshape \bibinfo{series}{Lecture Notes in Computer Science}}
  \bibinfo{volume}{9360}, \bibinfo{publisher}{Springer}, pp.
  \bibinfo{pages}{260--277}, \doi{10.1007/978-3-319-23506-6\_17}.

\bibitemdeclare{article}{UPPAAL97}
\bibitem{UPPAAL97}
\bibinfo{author}{Kim~Guldstrand \surnamestart Larsen\surnameend},
  \bibinfo{author}{Paul \surnamestart Pettersson\surnameend} \&
  \bibinfo{author}{Wang \surnamestart Yi\surnameend} (\bibinfo{year}{1997}):
  \emph{\bibinfo{title}{{UPPAAL} in a Nutshell}}.
\newblock {\slshape \bibinfo{journal}{Int. J. Softw. Tools Technol. Transf.}}
  \bibinfo{volume}{1}(\bibinfo{number}{1-2}), pp. \bibinfo{pages}{134--152},
  \doi{10.1007/s100090050010}.

\bibitemdeclare{inproceedings}{LPN11}
\bibitem{LPN11}
\bibinfo{author}{Sarah~M. \surnamestart Loos\surnameend},
  \bibinfo{author}{Andr{\'{e}} \surnamestart Platzer\surnameend} \&
  \bibinfo{author}{Ligia \surnamestart Nistor\surnameend}
  (\bibinfo{year}{2011}): \emph{\bibinfo{title}{Adaptive Cruise Control:
  Hybrid, Distributed, and Now Formally Verified}}.
\newblock In \bibinfo{editor}{Michael~J. \surnamestart Butler\surnameend} \&
  \bibinfo{editor}{Wolfram \surnamestart Schulte\surnameend}, editors:
  {\slshape \bibinfo{booktitle}{{FM} 2011: Formal Methods - 17th International
  Symposium on Formal Methods, Limerick, Ireland, 4, 2011. Proceedings}},
  {\slshape \bibinfo{series}{Lecture Notes in Computer Science}}
  \bibinfo{volume}{6664}, \bibinfo{publisher}{Springer}, pp.
  \bibinfo{pages}{42--56}, \doi{10.1007/978-3-642-21437-0\_6}.

\bibitemdeclare{inproceedings}{Ody15b}
\bibitem{Ody15b}
\bibinfo{author}{Heinrich \surnamestart Ody\surnameend} (\bibinfo{year}{2015}):
  \emph{\bibinfo{title}{Undecidability Results for Multi-Lane Spatial Logic}}.
\newblock In \bibinfo{editor}{Martin \surnamestart Leucker\surnameend},
  \bibinfo{editor}{Camilo \surnamestart Rueda\surnameend} \&
  \bibinfo{editor}{Frank~D. \surnamestart Valencia\surnameend}, editors:
  {\slshape \bibinfo{booktitle}{Theoretical Aspects of Computing - {ICTAC}}},
  {\slshape \bibinfo{series}{Lecture Notes in Computer Science}}
  \bibinfo{volume}{9399}, \bibinfo{publisher}{Springer}, pp.
  \bibinfo{pages}{404--421}, \doi{10.1007/978-3-319-25150-9_24}.

\bibitemdeclare{phdthesis}{Ody20}
\bibitem{Ody20}
\bibinfo{author}{Heinrich \surnamestart Ody\surnameend} (\bibinfo{year}{2020}):
  \emph{\bibinfo{title}{Monitoring of traffic manoeuvres with imprecise
  information}}.
\newblock Ph.D. thesis, \bibinfo{school}{University of Oldenburg, Germany}.
\newblock \urlprefix\url{https://oops.uni-oldenburg.de/4730}.

\bibitemdeclare{inproceedings}{RS97}
\bibitem{RS97}
\bibinfo{author}{Jean{-}Fran{\c{c}}ois \surnamestart Raskin\surnameend} \&
  \bibinfo{author}{Pierre{-}Yves \surnamestart Schobbens\surnameend}
  (\bibinfo{year}{1997}): \emph{\bibinfo{title}{State Clock Logic: {A}
  Decidable Real-Time Logic}}.
\newblock In \bibinfo{editor}{Oded \surnamestart Maler\surnameend}, editor:
  {\slshape \bibinfo{booktitle}{Hybrid and Real-Time Systems, International
  Workshop. HART'97, Grenoble, France, March 26-28, 1997, Proceedings}},
  {\slshape \bibinfo{series}{Lecture Notes in Computer Science}}
  \bibinfo{volume}{1201}, \bibinfo{publisher}{Springer}, pp.
  \bibinfo{pages}{33--47}, \doi{10.1007/BFb0014711}.

\bibitemdeclare{article}{Sch00}
\bibitem{Sch00}
\bibinfo{author}{Fred~B. \surnamestart Schneider\surnameend}
  (\bibinfo{year}{2000}): \emph{\bibinfo{title}{Enforceable security
  policies}}.
\newblock {\slshape \bibinfo{journal}{{ACM} Trans. Inf. Syst. Secur.}}
  \bibinfo{volume}{3}(\bibinfo{number}{1}), pp. \bibinfo{pages}{30--50},
  \doi{10.1145/353323.353382}.

\bibitemdeclare{article}{Sch18}
\bibitem{Sch18}
\bibinfo{author}{Maike \surnamestart Schwammberger\surnameend}
  (\bibinfo{year}{2018}): \emph{\bibinfo{title}{An abstract model for proving
  safety of autonomous urban traffic}}.
\newblock {\slshape \bibinfo{journal}{Theor. Comput. Sci.}}
  \bibinfo{volume}{744}, pp. \bibinfo{pages}{143--169},
  \doi{10.1016/j.tcs.2018.05.028}.

\bibitemdeclare{inproceedings}{Sch18b}
\bibitem{Sch18b}
\bibinfo{author}{Maike \surnamestart Schwammberger\surnameend}
  (\bibinfo{year}{2018}): \emph{\bibinfo{title}{Introducing Liveness into
  Multi-lane Spatial Logic lane change controllers using {UPPAAL}}}.
\newblock In \bibinfo{editor}{Mario \surnamestart Gleirscher\surnameend},
  \bibinfo{editor}{Stefan \surnamestart Kugele\surnameend} \&
  \bibinfo{editor}{Sven \surnamestart Linker\surnameend}, editors: {\slshape
  \bibinfo{booktitle}{Proceedings 2nd International Workshop on Safe Control of
  Autonomous Vehicles, SCAV@CPSWeek 2018, Porto, Portugal, 10th April 2018}},
  {\slshape \bibinfo{series}{{EPTCS}}} \bibinfo{volume}{269}, pp.
  \bibinfo{pages}{17--31}, \doi{10.4204/EPTCS.269.3}.

\bibitemdeclare{book}{WD96}
\bibitem{WD96}
\bibinfo{author}{J.~C.~P. \surnamestart Woodcock\surnameend} \&
  \bibinfo{author}{Jim \surnamestart Davies\surnameend} (\bibinfo{year}{1996}):
  \emph{\bibinfo{title}{Using {Z} - specification, refinement, and proof}}.
\newblock \bibinfo{series}{Prentice Hall international series in computer
  science}, \bibinfo{publisher}{Prentice Hall}.

\end{thebibliography}
\end{document}